\documentclass[runningheads]{llncs}

\usepackage[T1]{fontenc}
\usepackage{amsmath,amssymb,amsfonts}
\usepackage{graphicx}
\usepackage{xcolor}
\usepackage{tikz}
\usepackage{pgfplots}
\usepackage{cleveref}

\pgfplotsset{compat=1.18}
\usetikzlibrary{matrix,fit}

\newcommand{\avg}[1]{\bar{#1}}
\newcommand{\avp}{\mbox{$\avg{p}$}}
\newcommand{\ost}{{\omega_*}}
\newcommand{\odg}{{\omega_\dag}}
\newcommand{\sfrac}[2]{{\textstyle\frac{#1}{#2}}}
\newcommand{\A}{\mathcal{A}}
\newcommand{\W}{\mathcal{W}}
\newcommand{\Cov}[2]{\mathrm{Cov}\!\left(#1,#2\right)}

\begin{document}

\title{Bridging Dependence: From Opinion Leader
Influence to Standard Correlation Measures}

\titlerunning{Bridging Dependence}

\author{Jonas Karge)}

\authorrunning{Jonas Karge}

\institute{TU Dresden}

\maketitle              
\begin{abstract}
As a cornerstone result of social choice and democracy theory, the Condorcet Jury Theorem (CJT) provides probabilistic guarantees for the correctness of majority decisions under idealized conditions, in particular assuming that voters are statistically independent.
When generalizing the original CJT, a standard approach in the literature to model dependence introduces an external \textit{opinion leader} (OL) whose influence affects individual voting behavior. In this paper, we take the first steps toward a unified framework for modeling dependence by translating the OL model into standard correlation measures, namely covariance and correlation coefficient. These translations build conceptual bridges between different dependence models in the voting literature and facilitate the application of existing theoretical results across frameworks.

\keywords{Epistemic Voting  \and Correlated Voting \and Jury Theorem.}
\end{abstract}

\section{Introduction}

As a cornerstone result of social choice and democracy theory, the Condorcet Jury Theorem (CJT) provides probabilistic guarantees for the correctness of majority decisions under idealized conditions, in particular assuming that voters are statistically independent.
When generalizing the original CJT, a standard approach in the literature to model dependence introduces an external \textit{opinion leader} (OL) whose influence affects individual voting behavior. In this paper, we take the first steps toward a unified framework for modeling dependence by translating the OL model into standard correlation measures, namely covariance and correlation coefficient. These translations build conceptual bridges between different dependence models in the voting literature and facilitate the application of existing theoretical results across frameworks.

The CJT has found widespread interest across disciplines. 
For example, recent advances in artificial intelligence have sparked growing interest in social choice theory, the study of how individual preferences or judgments can be aggregated into collective decisions. Concepts from social choice theory are particularly relevant to AI in domains requiring opinion aggregation, such as voting systems, federated learning, or multi-agent coordination, where groups must reconcile diverse inputs to reach coherent outcomes. When aggregating opinions through voting, two primary objectives emerge: ensuring procedurally fair processes or identifying rules that reliably track the correct alternative (i.e., an underlying ground truth) \cite{dietrich2021social}. In this work, we adopt the latter perspective, aligning with the tradition of epistemic voting, which evaluates voting rules based on their ability to discern truth. At the core of epistemic voting lies the \textit{Condorcet Jury Theorem}, which offers probabilistic guarantees that the correct alternative will be selected through voting, provided certain assumptions hold.

The original formulation of the Condorcet Jury Theorem (CJT) is based on several key assumptions: agents are equally competent \textit{(homogeneity)}, they are more likely to vote for the correct alternative than for an incorrect one \textit{(reliability)}, they act independently, uninfluenced by each other or by any external factors \textit{(independence)}, and they select exactly one option \textit{(completeness)} from a set of precisely two alternatives \textit{(dichotomy)} under majority rule \cite{KR2022-21}. Given these conditions, the classical CJT \cite{Condorcet} establishes the following result.

\begin{theorem}
For odd-num\-be\-red, homogeneous groups of independent and reliable agents in a dichotomic voting setting, the probability that majority voting identifies the correct alternative 
\begin{itemize}
\item increases monotonically with the number of agents and 
\item converges to $1$ as the number of agents goes to infinity.
\end{itemize}
\end{theorem}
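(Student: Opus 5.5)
We formalize the setting as follows. Let $n=2m+1$ agents vote independently, each choosing the correct alternative with the same probability $p>\sfrac{1}{2}$. The number of correct votes $S_n$ is then $\mathrm{Bin}(n,p)$. Write
\[
P_n=\Pr(S_n\ge m+1)=\sum_{k=m+1}^{n}\binom{n}{k}p^k(1-p)^{n-k}
\]
for the probability that majority voting is correct. Both parts of the statement concern the sequence $(P_{2m+1})_{m\ge 0}$, and we treat them separately.

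For \emph{monotonicity}, we compare $P_{2m+3}$ with $P_{2m+1}$ by adding two fresh independent voters to a group of $2m+1$. The majority verdict can only change if the first $2m+1$ voters are split as narrowly as possible. There are two such cases.
\begin{itemize}
\item If exactly $m$ of the first group vote correctly (probability $\binom{2m+1}{m}p^m(1-p)^{m+1}$), the verdict flips from wrong to right precisely when both new voters are correct, which has probability $p^2$.
\item If exactly $m+1$ vote correctly (probability $\binom{2m+1}{m+1}p^{m+1}(1-p)^m$), the verdict flips from right to wrong precisely when both new voters are wrong, which has probability $(1-p)^2$.
\end{itemize}
In every other case the two new votes cannot overturn the existing margin. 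Since $\binom{2m+1}{m}=\binom{2m+1}{m+1}$, we obtain
\[
P_{2m+3}-P_{2m+1}=\binom{2m+1}{m}p^{m+1}(1-p)^{m+1}\bigl(p-(1-p)\bigr)=\binom{2m+1}{m}\bigl(p(1-p)\bigr)^{m+1}(2p-1).
\]
This is strictly positive whenever $\sfrac{1}{2}<p<1$, and it is zero in the degenerate case $p=1$, where $P_n=1$ for all $n$. Hence the sequence is monotonically increasing.

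The main step to do carefully is the case analysis above: we must verify that no other configuration of the first $2m+1$ votes lets the two additional votes change the outcome. This holds because the margin of correct over incorrect votes is odd. Any margin of at least $3$ in either direction cannot be reversed by two votes, so only margins of $\pm1$ matter.

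For \emph{convergence}, we use Chebyshev's inequality, or equivalently the weak law of large numbers. We have $\mathbb{E}[S_n]=np$ and $\mathrm{Var}(S_n)=np(1-p)$. Majority voting fails only if $S_n\le n/2$, which forces $|S_n-np|\ge n(p-\sfrac{1}{2})$. Therefore
\[
1-P_n\le\frac{np(1-p)}{n^2(p-\sfrac12)^2}=\frac{p(1-p)}{n(p-\sfrac12)^2}\to 0 .
\]
This gives $P_n\to1$. A Hoeffding bound $1-P_n\le e^{-2n(p-1/2)^2}$ would also yield an exponential rate, but it is not needed.
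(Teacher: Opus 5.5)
Your proof is correct. The odd-margin argument correctly isolates the two pivotal configurations, the increment $P_{2m+3}-P_{2m+1}=\binom{2m+1}{m}\bigl(p(1-p)\bigr)^{m+1}(2p-1)$ is computed correctly, and the Chebyshev estimate gives the limit. The paper gives no proof of this statement; it quotes it as Condorcet's classical theorem. So the useful comparison is with the machinery the paper does use.

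For the convergence half, specialize Theorem~\ref{thm:bounds} to two alternatives with completeness, so that approval voting is majority voting and $\Delta p=p-(1-p)=2p-1$, and take $\pi=0$, so that $\hat p$ drops out. It then yields $P_{\min}\ge 1-e^{-\frac12 n(2p-1)^2}=1-e^{-2n(p-\frac12)^2}$, which is exactly the Hoeffding bound you mention as optional. That route buys an exponential rate and extends to heterogeneous competences, many alternatives and opinion-leader dependence. Your Chebyshev bound gives only an $O(1/n)$ rate, but it is completely elementary.

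For the monotonicity half, the paper's concentration-type bounds give nothing, and the paper itself notes (citing Owen) that monotonicity fails under heterogeneity. Your coupling shows precisely where homogeneity is used. The increment is $\Pr(S_{2m+1}=m)\,p^2-\Pr(S_{2m+1}=m+1)\,(1-p)^2$. Only homogeneity guarantees $\Pr(S_{2m+1}=m+1)=\frac{p}{1-p}\Pr(S_{2m+1}=m)$ for $p<1$, which turns the increment into $\Pr(S_{2m+1}=m)\,p(2p-1)>0$. For heterogeneous reliable voters that ratio is unconstrained and the increment can be negative. For example, a single voter of competence $0.99$ joined by two of competence $0.51$ lowers the success probability from $0.99$ to about $0.755$.

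One cosmetic point: your $m$ (half the group size) clashes with the paper's $m$ (number of alternatives), so rename it if this is to sit alongside Theorem~\ref{thm:bounds}.
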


\paragraph{Condorcet Jury Theorem in AI safety, robustness, and ethics.}
The CJT underpins many AI applications where a \emph{wisdom-of-crowds} effect can improve accuracy through aggregation.  For \emph{risk quantification} and \emph{robustness}, recent work leverages CJT-style majority rules together with Byzantine-resilient consensus, proposing \emph{consensus learning} to secure collective model decisions in distributed settings \cite{magureanu2024consensus}, akin to machine learning models where \emph{ensemble} methods aggregate diverse classifiers via voting and typically outperform single models, leveraging CJT-like guarantees \cite{Dietterich,Lam}. For example, in safety-critical perception, CJT principles have been used to derive voting-based ensemble scores that improve efficiency in COVID-19 identification \cite{srivastava2022ensemble} and enhance detection of colon cancer in histopathology \cite{srivastava2023cjt}. 

Beyond risk analysis, the CJT has entered \emph{AI safety and ethics} debates. For example, it has been examined whether LLMs can strengthen democratic practice (e.g., by summarizing public consultations or assisting policy deliberation) and assess the extent to which such tools preserve the epistemic virtues that make democratic aggregation attractive in the first place \cite{lazar2024can}. Relatedly, it has been analyzed whether AI trained on \emph{crowdsourced moral judgments} can achieve “moral expertise’’ via wisdom-of-crowds mechanisms, while highlighting legitimacy challenges under reasonable moral disagreement \cite{schuster2025moral}. Finally, it has been evaluated whether \emph{AI alignment} should be determined democratically by affected stakeholders or deferred to normative experts, discussing the CJT as a potential justification and its limits when independence, competence, or shared ground truth are in doubt \cite{steingruber2025justifications}. 


Despite its wide applicability, the CJT relies on idealized assumptions that rarely hold in practice. A central research goal has therefore been to generalize the theorem while preserving its asymptotic conclusions under weaker conditions. Yet, once heterogeneity in agent competence is introduced, the monotonicity result no longer holds \cite{Owen}.
In particular, the CJT’s guarantees rely on \emph{independence} of voters and better-than-chance \emph{competence}. Correlated errors and systematic biases can erode or reverse the majority-vote advantage. Therefore, a central concern in order to make safety claims with CJT-style ensembles is to address the independence assumption, and to make correlation quantifiable. 
 
\paragraph{Weakening Independence.} A large body of work has focused on relaxing the independence assumption by allowing interdependencies among voters. A well-studied approach introduces an \textit{opinion leader} (OL), representing an external influence. The OL does not vote directly but evaluates alternatives with competence~$\hat{p}$ (the probability of choosing the correct option). Each agent then follows the OL’s choice with probability~$\pi$ and otherwise relies on their own judgment \cite{OL}.

In one of the earliest generalizations of the CJT to include such influence, Boland, Proschan, and Tong assume homogeneous agent competence levels $p$ and an equally competent OL, i.e., $\hat{p} = p$, within a dichotomous voting framework using majority rule \cite{Boland} . Berg  subsequently extended this model to encompass weighted voting rules, still within a dichotomous setting \cite{Berg}. Later, Goodin and Spiekermann  relaxed the assumption of equal competence between agents and the OL, allowing $\hat{p} \neq p$ \cite{Goodin}. They also identified a precise threshold at which the asymptotic success of the CJT breaks down, depending on both $\pi$ and $p$.
Beyond the OL model, several other approaches have explored voter dependence in CJT generalizations. Ladha incorporates pairwise correlations among voters, constrained by average correlation coefficients \cite{Ladha}. Kaniovski characterizes classes of joint distributions where the overall jury competence may improve or deteriorate, depending on voter competence and correlation structure \cite{Kaniovski}. Pivato likewise permits dependence but restricts the average covariance among voters \cite{Pivato}.

Despite the breadth of approaches, establishing models of voter dependence that are simultaneously rigorous, transparent, and practically applicable remains an ongoing challenge in jury theorem research \cite{sep-jury-theorems}. Addressing this gap, our work takes initial steps toward unifying different notions of dependence under a common framework.

\paragraph{Our Contribution.}
We provide the first explicit translation of OL-induced dependence into standard statistical metrics, namely, covariance and correlation, thereby unifying different dependence models in jury theorems. Our objectives are twofold:  
\begin{enumerate}
    \item to establish a conceptual bridge that enables theoretical results to transfer across existing frameworks, and  
    \item to enhance practical applicability by expressing OL influence in terms of familiar correlation measures.  
\end{enumerate}


To this end, we first present the underlying voting and probabilistic framework, as well as a recently proposed jury theorem by \cite{OL}, which incorporates the OL influence while weakening all other original assumptions and providing probabilistic guarantees for identifying the correct alternative. We then derive translations from the OL model to standard correlation measures. In addition, we show how existing bounds, specifically, the result of \cite{OL}, can be applied to these translated correlation metrics.

\section{Preliminaries}

The \textit{Condorcet Jury Theorem} (CJT) is a cornerstone of voting theory, providing probabilistic guarantees for identifying the correct alternative under a set of idealized assumptions. A recent generalization by \cite{OL} relaxes all of these assumptions simultaneously. In this extended framework, agents may vote over any finite number of alternatives, individual competence levels may vary, and correlation among voters is introduced through the classical \textit{opinion leader} (OL) model \cite{Boland}.


Before presenting this result, we first introduce the underlying voting and probabilistic framework, following the exposition of Karge et al. \cite{OL}.

\paragraph{The Formal Voting Framework.} As the basic building blocks of our voting framework, let $\,\W = \{\omega_1,\ldots, \omega_m\}$ denote a finite set of $m$ alternatives, and let $\A = \{a_1,\ldots, a_n\}$ denote a finite set of agents. With this, a single \textit{approval voting (instance)} can be represented by the relation $V \subseteq \A \times \W$, where $(a_{i},\omega_j)\in V$ means that agent $a_{i}$ approves choice $\omega_j$. Then we define the \textit{score} 
$\#_{\!V}\,\omega$ of a choice $\omega \in \W$ as $\#_{\!V}\,\omega = |\{ a_{i} \in \A \mid (a_{i},\omega) \in V \}|.$ Finally, the winning alternative is defined as the alternative that gets a strictly higher score than any other alternative: $\#_{\!V}\,\omega > \max_{\omega'\in \W \setminus \{\omega\}} \#_{\!V}\,\omega'$.

\paragraph{Formal Probabilistic Model.}
The voting scenario described is modeled by a random process that generates the correct alternative, $\ost$, the OL's choice, as well as the agent's \textit{privately approved choice}: this is the choice the agent would make if the agent were not influenced by the OL. This process is governed by a joint probability distribution $\mathbb{P}$ over Bernoulli (i.e., $\{0,1\}$-valued) random variables, which can be represented as follows:
    
$$
\begin{array}{c@{\ }c@{\ }c}
X^{\omega_1}_*, &\ldots &, X^{\omega_m}_*,\\ 
X^{\omega_1}_{o}, &\ldots &, X^{\omega_m}_{o},\\ 
X^{\omega_1}_{1}, &\ldots &, X^{\omega_m}_{1},\\ 
\vdots & \vdots\,\vdots\,\vdots & \vdots\\  
X^{\omega_1}_{n}, &\ldots &, X^{\omega_m}_{n}.\\ 
\end{array}
$$
The values taken by these random variables represent the outcome of a voting event as follows: 

\begin{itemize}
\item $X^{\omega_j}_*$ is $1$ if $\omega_j$ is the true world state (i.e, $\omega_j = \ost$), otherwise $0$, 
\item $X^{\omega_j}_o$ is $1$ if the OL approves $\omega_j$, otherwise $0$, 
\item $X^{\omega_j}_{i}$ represents the $i$th agent's private signal regarding his approval of the $j$th world state: it is $1$ if $a_i$ privately approves $\omega_j$ and 0 otherwise. 
\end{itemize}

Given the joint distribution, we define the random variable~$V^{\omega_j}_i$ to represent the final outcome of agent~$i$'s vote in world state~$\omega_j$, i.e., after potential influence by the opinion leader. By assumption, this vote is a probabilistic mixture: with probability~$\pi$, agent~$i$ adopts the OL's vote~$X^{\omega_j}_o$, and with probability~$1 - \pi$, they follow their own judgment~$X^{\omega_j}_i$. Accordingly, for any $x \in \{0,1\}$, we have:
$$
\mathbb{P}(V^{\omega_j}_i = x) = \pi \, \mathbb{P}(X^{\omega_j}_o = x) + (1 - \pi) \, \mathbb{P}(X^{\omega_j}_i = x).
$$

Let $p^{\omega}_1, \ldots, p^{\omega}_n$ denote the Bernoulli parameters of the agents’ individual (i.e., ``inner voice'') random variables $X^{\omega}_1, \ldots, X^{\omega}_n$, for each $\omega \in \W$. That is, $p^{\omega_j}_i = \mathbb{P}(X^{\omega_j}_i = 1)$. Similarly, for each $\omega \in \W$, let $\hat{p}^{\omega_1}, \ldots, \hat{p}^{\omega_m}$ be the Bernoulli parameters of the OL's vote variables $X^{\omega_1}_o, \ldots, X^{\omega_m}_o$.

The probability that the OL approves the correct alternative (i.e., that $X^{\ost}_o = 1$) is given by $\hat{p} = \mathbb{P}(X^{\omega_*}_o = 1)$. For convenience, we abbreviate the event that $\omega_j$ is the correct world state as $[\omega_* = \omega_j]$, which remains unknown to the agents.

We now come to two central assumptions about the joint distribution that are necessary to derive the asymptotic result. The first assumption states that, although the agents are correlated through the OL's choice, we assume their private signals to be independent.
Conditioning upon the actual state of the world, we can define \textit{private agent approval independence} as follows:

\begin{definition}\label{def:independence}
Conditioned on the true state, private approvals are independent across agents: for every $\omega \in \{\omega_1,\ldots,\omega_m\}$ and every $v_1,\ldots,v_n \in \{0,1\}$,
\[
\mathbb{P}\!\left(\bigwedge_{i=1}^n X_i^{\omega}=v_i \,\middle|\, \ost=\omega\right)
\;=\;
\prod_{i=1}^n \mathbb{P}\!\left(X_i^{\omega}=v_i \,\middle|\, \ost=\omega\right).
\]
\end{definition}

That is, conditioned on the true world state, the joint probability of any given pattern of
private approval decisions with respect to a given alternative can be computed by taking the
product of the corresponding marginal probabilities.


As a second key assumption, we deal with the competencies of agents with respect to their privately approved choices, governed by the parameter $p^{\omega}_{k}$ of the $k$th agent with respect to its ability to identify the true world state among any number of alternatives when uninfluenced by the OL. We denote the average of these competencies by $$\quad \avp^{\omega}= \frac{1}{n} \sum\limits_{k=1}^n p^{\omega}_{k}$$ and formalize the second assumption as follows \cite{OL}:

\begin{definition}\label{def:reliability}
A joint probability distribution satisfies \textit{$\Delta p$-group reliability} for some $\Delta p > 0$, if the probability, with respect to the agent's inner voice, to approve the true world state, averaged across all agents, is at least $\Delta p$ higher than the averaged probability for approving any other state, i.e., for 
every $n$ and $\omega_\dag\in \W \setminus \{\ost\}$ holds
$$\avp^{\ost} \geq \Delta p + \avp^{\odg}.$$
\end{definition}

From these assumptions, a bound on a minimum success probability, $P_{\text{min}}$, for the electorate to identify the correct alternative can be derived \cite{OL}. 
Observe that this bound converges to 1 as the number of agents goes to infinity if we bound the permissible $\pi$-value to not exceed the margin, $\Delta p$ by which the agents are on average more likely to vote for the correct alternative than for any competitor such that $\pi \leq \sfrac{\Delta p}{\Delta p + 1}$. 

\begin{theorem}\label{thm:bounds} 
Consider an approval voting setting with $m > 1$ alternatives, satisfying private agent approval independence (\Cref{def:independence}) and $\Delta p$-group reliability (\Cref{def:reliability})  for some $\Delta p \in (0,1]$, influenced by an opinion leader with    
$\pi \in [0,\sfrac{\Delta p}{\Delta p + 1})$ and $\hat{p} \in [0,1]$.
Then, it is guaranteed that the success probability, $P_\mathrm{min}$, of the approval voting process is at least
\begin{align}
1 - (m-1) \Big(\hat{p}  e^{-\frac{1}{2}n\Delta p^2(1-\pi)^2}  + (1 - \hat{p}) e^{-\frac{1}{2}n (\Delta p (1 {-} \pi) - \pi)^2} \Big)  . \nonumber
\end{align}
\end{theorem}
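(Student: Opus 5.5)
We prove the bound by conditioning on whether the opinion leader approves the true state. We then apply a Hoeffding bound to each competing alternative and combine with a union bound over the $m-1$ incorrect alternatives.

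\emph{Conditioning and union bound.} Fix the true state $\ost$ and write $P_\mathrm{succ} \ge 1 - \sum_{\odg\neq\ost} \mathbb{P}(\#_V\ost \le \#_V\odg)$. We further condition on the OL's approvals $X_o^{\ost}$ and $X_o^{\odg}$. The decisive case split is on the event $X_o^{\ost}=1$, which has probability $\hat p$, versus $X_o^{\ost}=0$. For each agent we define
\[
D_i = V_i^{\ost}-V_i^{\odg}\in[-1,1].
\]
We model the adoption of the OL's choice by independent Bernoulli$(\pi)$ switches per agent; this is the reading under which $\mathbb{P}(V=x)$ has the stated mixture form. Combined with private agent approval independence (\Cref{def:independence}), the $D_i$ are then conditionally independent given the OL's vote and the true state.

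\emph{Conditional expectations.} We compute $\mathbb{E}[D_i]$ in each case. Since $\mathbb{E}[D_i]=\pi(X_o^{\ost}-X_o^{\odg})+(1-\pi)(p_i^{\ost}-p_i^{\odg})$, averaging and using $\Delta p$-group reliability (\Cref{def:reliability}) gives the following:
\begin{itemize}
\item If $X_o^{\ost}=1$, then $X_o^{\ost}-X_o^{\odg}\ge 0$, so $\frac1n\sum_i\mathbb{E}[D_i]\ge (1-\pi)\Delta p$.
\item If $X_o^{\ost}=0$, then $X_o^{\ost}-X_o^{\odg}\ge -1$, so $\frac1n\sum_i\mathbb{E}[D_i]\ge (1-\pi)\Delta p-\pi$.
\end{itemize}
The hypothesis $\pi<\sfrac{\Delta p}{\Delta p+1}$ is exactly the condition that makes $(1-\pi)\Delta p-\pi>0$. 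The mean margin is therefore positive in both cases.

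\emph{Hoeffding step.} Failure against $\odg$ means $\sum_i D_i\le 0$, i.e.\ a deviation of at least $n\mu$ below the mean, where $\mu$ is the respective lower bound above. Each $D_i$ lies in an interval of length at most $2$. The plain Hoeffding inequality would therefore give $\exp(-2n^2\mu^2/(4n))=e^{-n\mu^2/2}$, which matches both exponents in the statement. This yields $e^{-\frac12 n\Delta p^2(1-\pi)^2}$ in the first case and $e^{-\frac12 n(\Delta p(1-\pi)-\pi)^2}$ in the second.

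\emph{Combining.} Weighting the two cases by $\hat p$ and $1-\hat p$ and summing over the $m-1$ competitors gives the stated bound.

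\emph{Main obstacle.} The delicate step is justifying conditional independence of the $D_i$ given the OL's vote, together with making the per-agent range of $D_i$ precise in the approval setting. There, $V_i^{\ost}$ and $V_i^{\odg}$ can both be $1$, so we must check that the length-$2$ range is what enters Hoeffding and that the adoption mechanism copies the OL's whole ballot. I would first make the generative model explicit: one switch per agent, independent of everything else, and given $\ost$ the private ballots are independent across agents. I would then verify that independence across agents persists after conditioning on the OL's ballot.
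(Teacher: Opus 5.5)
The paper does not prove this statement. It is quoted from Karge et al.~\cite{OL} and used as a black box, so there is no internal argument to compare yours with. Your route is the natural one: a union bound over the $m-1$ competitors, a case split on $X_o^{\ost}$, and Hoeffding for $\sum_i D_i$ with $D_i\in[-1,1]$. The computations are right. The conditional mean margins $(1-\pi)\Delta p$ and $(1-\pi)\Delta p-\pi$ are correct, and the hypothesis on $\pi$ is exactly positivity of the second. Since the true conditional mean of $\sum_i D_i$ is at least $n\mu>0$, Hoeffding gives $\exp(-2(n\mu)^2/(4n))=e^{-n\mu^2/2}$, which reproduces both exponents. Weighting the two cases by $\hat p$ and $1-\hat p$ finishes the proof.

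The one sentence that overclaims is ``combined with \Cref{def:independence}, the $D_i$ are conditionally independent.'' For that you need three things:
(i) the pairs $(X_i^{\ost},X_i^{\odg})$ independent across agents given $\ost$;
(ii) the private approvals independent of the OL's ballot given $\ost$, which you also use silently in your formula for $\mathbb{E}[D_i]$;
(iii) the switches independent across agents and of all signals.

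\Cref{def:independence} as written only makes the true-state column $X_1^{\ost},\dots,X_n^{\ost}$ independent. This is a defect of the hypothesis, not of your method, because read literally the theorem is false. Take $m=2$, $\pi=0$, $p_i^{\ost}=0.9$, and let all agents approve $\odg$ together with probability $1/2$ and none otherwise. Then $\Delta p=0.4$, but failure has probability at least $1/2$ for every $n$. So you must state the ballot-level model explicitly, as you propose; the standing assumptions at the start of the translation section point in this direction.

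Two side remarks. First, copying the OL's whole ballot is not needed. Per-alternative switches work as long as each agent's bundle of switches and private approvals is independent of the other agents' bundles and of $X_o$, since $\mathbb{E}[D_i\mid X_o]$ follows by linearity. Second, suppose you only assume column-wise independence (each $X_1^{\omega},\dots,X_n^{\omega}$ independent given $\ost$). Then your difference trick is unavailable. Instead, bound $\#_V\ost$ from below at distance $n\mu/2$ under its conditional mean, and each $\#_V\odg$ from above at distance $n\mu/2$ over its conditional mean. Hoeffding with range $1$ on each score separately gives the same exponents, with factor $m$ instead of $m-1$, because the event for $\ost$ is shared by all competitors.
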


The following example illustrates a combination of the underlying parameters that results in a $P_\mathrm{min}$-value.


\begin{example}
Consider a set of $n = 100$ agents who are relatively competent, $\Delta p = 0.3$, and vote on a set of $m = 5$ alternatives. Suppose there is also a moderately influential opinion leader, $\pi = 0.1$, who is relatively likely to approve the correct alternative, $\hat{p} = 0.5$. This gives a worst case success probability of $P_{min} = 0.48$ for the correct alternative to win the approval vote.
\end{example}

Having outlined the voting and probabilistic framework as well as stated a jury theorem result based on the classical OL model for dependence, we now provide translations from the OL model to standard correlation metrics.



\section{Translating Measures of Correlation}

In this section, we present explicit translations from the probability of following the opinion leader ($\pi$) to two widely used dependence measures: covariance and the correlation coefficient. As a foundational step, we begin with a common special case aligned with several Condorcet Jury Theorem frameworks: we assume homogeneous competence levels for all agents and for the opinion leader. This matches the original OL model of Boland et al.\ \cite{Boland}. While related frameworks such as Everaere et al.\  also use homogeneous competencies in approval voting \cite{everaere2010epistemic}, they typically assume voter independence, a condition rarely met in practice. By incorporating an OL while maintaining homogeneous competence, we can leverage the bound of Karge et al.\ \cite{OL}, which subsumes the assumptions needed for our translations.

By extending these and similar settings to include an equally competent opinion leader and explicitly weakening the independence assumption, our work enables the consideration of applications that not only involve dependence structures among agents but also require the quantification of these dependencies using standard correlation measures.


Concretely, in this initial homogeneous case, the competence for approving any given alternative $\omega_j$ is the same for all agents and the OL: $p^{\omega_j}_i = \hat p^{\omega_j} = p^{\omega_j}$ for all $i$ and $j$. For the relevant alternative $\omega_j$, we write this common value simply as $p$.

\paragraph{Standing assumptions for the translations.}
We fix the true alternative $\omega_*=\omega_j$ and work throughout conditional on $\omega_*$ (we omit the conditioning bar).
We assume homogeneous competence on the true state, i.e., $\mathbb{P}(X_i^{\omega_*}=1\mid \omega_*)=\mathbb{P}(X_o^{\omega_*}=1\mid \omega_*)=p$ for all $i$, and conditional independence of private approvals across agents and the OL given $\omega_*$. The final vote is $V_i^{\omega_j}=X_o^{\omega_j}$ with probability $\pi$ and $V_i^{\omega_j}=X_i^{\omega_j}$ otherwise, with the following indicators independent across agents and of all signals.

To begin our derivation, we first establish the probabilities of an agent's final vote, $V_i^{\omega_j}$, conditioned on the opinion leader's choice, $X_o^{\omega_j}$. These probabilities are based on the OL model \cite{Boland}.

First, consider the case where the OL approves alternative $\omega_j$ (i.e., $X_o^{\omega_j} = 1$):
\begin{align}
\mathbb{P}(V_i^{\omega_j} = 1 \mid X_o^{\omega_j} = 1) &= p + \pi (1-p) \label{eq:V1OL1} \\
\mathbb{P}(V_i^{\omega_j} = 0 \mid X_o^{\omega_j} = 1) &= (1-p) - \pi (1-p) \label{eq:V0OL1}
\end{align}
These expressions represent the probability that agent $i$'s final vote is for ($V_i^{\omega_j} = 1$) or against ($V_i^{\omega_j} = 0$) alternative $\omega_j$, given that the opinion leader approves it.
Conversely, consider the case where the OL disapproves alternative $\omega_j$ (i.e., $X_o^{\omega_j} = 0$):
\begin{align}
\mathbb{P}(V_i^{\omega_j} = 1 \mid X_o^{\omega_j} = 0) &= p - \pi p \label{eq:V1OL0} \\
\mathbb{P}(V_i^{\omega_j} = 0 \mid X_o^{\omega_j} = 0) &= (1-p) + \pi p \label{eq:V0OL0}
\end{align}

\subsection{Derivation for the Homogeneous Case}

We now derive the covariance and correlation coefficient for the homogeneous case, where all agents and the opinion leader share the same competence $p$.

\paragraph{Expectation and Variance.}
For a binary random variable, its expectation is equal to the probability of it taking the value 1. We first determine the overall probability that an agent's final vote is 1, using the law of total probability:
\begin{align}
\mathbb{P}(V_i^{\omega_j} = 1) &= \mathbb{P}(V_i^{\omega_j} = 1 \mid X_o^{\omega_j}=1)\mathbb{P}(X_o^{\omega_j}=1) \nonumber \\
& \quad + \mathbb{P}(V_i^{\omega_j} = 1 \mid X_o^{\omega_j}=0)\mathbb{P}(X_o^{\omega_j}=0) \nonumber \\
&= (p + \pi (1-p))p + (p - \pi p)(1-p) \nonumber \\
&= p. \label{eq:PV1}
\end{align}



Hence $\mathbb{E}[V_i^{\omega_j}]=p$ and, since $V_i^{\omega_j}$ is Bernoulli,
\begin{equation}
\mathrm{Var}(V_i^{\omega_j}) = p(1-p). \label{eq:VarV_homo}
\end{equation}

\paragraph{Covariance.}
To compute the covariance between the votes of two distinct agents, $V_i^{\omega_j}$ and $V_k^{\omega_j}$ ($i \neq k$), we first need to compute the expectation of their product, $\mathbb{E}(V_i^{\omega_j}V_k^{\omega_j})$. 


Since $V_i^{\omega_j}$ and $V_k^{\omega_j}$ are conditionally independent given the OL's choice, we apply the law of total probability:
\begin{align}
&\mathbb{P}(V_i^{\omega_j}=1, V_k^{\omega_j}=1) \nonumber \\
&= \mathbb{P}(V_i^{\omega_j}=1, V_k^{\omega_j}=1 \mid X_o^{\omega_j}=1)\mathbb{P}(X_o^{\omega_j}=1) \nonumber \\
&\quad + \mathbb{P}(V_i^{\omega_j}=1, V_k^{\omega_j}=1 \mid X_o^{\omega_j}=0)\mathbb{P}(X_o^{\omega_j}=0) \nonumber \\
&= (\mathbb{P}(V_i^{\omega_j}=1 \mid X_o^{\omega_j}=1) \mathbb{P}(V_k^{\omega_j}=1 \mid X_o^{\omega_j}=1))p \nonumber \\
&\quad + (\mathbb{P}(V_i^{\omega_j}=1 \mid X_o^{\omega_j}=0) \mathbb{P}(V_k^{\omega_j}=1 \mid X_o^{\omega_j}=0))(1-p) \nonumber \\
&= (p + \pi (1-p))^2 p + (p - \pi p)^2 (1-p) \nonumber \\
&= p^2 + \pi^2 p(1-p). \label{eq:expproduct_homo}
\end{align}
Using $\text{Cov}(V_i^{\omega_j}, V_k^{\omega_j}) = \mathbb{E}(V_i^{\omega_j}V_k^{\omega_j}) - \mathbb{E}(V_i^{\omega_j})\mathbb{E}(V_k^{\omega_j})$:
\begin{align}
\text{Cov}(V_i^{\omega_j}, V_k^{\omega_j}) &= (p^2 + \pi^2 p(1-p)) - p \cdot p \nonumber \\
&= \pi^2 p(1-p). \label{eq:Covariance_homo}
\end{align}

This not only provides the direct translation from covariance into the OL model, i.e.\
$\text{Cov}(V_i^{\omega_j}, V_k^{\omega_j}) = \pi^2\, p(1-p)$,
but also the reverse direction by solving for $\pi$:
\[
\pi \;=\; \sqrt{\sfrac{\text{Cov}(V_i^{\omega_j}, V_k^{\omega_j})}{p(1-p)}}\,.
\]

\paragraph{Correlation Coefficient.}

The correlation coefficient, $\rho$, standardizes the covariance by dividing it by the product of the standard deviations. For the homogeneous case, the standard deviations are equal, $\sqrt{\text{Var}(V_i^{\omega_j})} = \sqrt{p(1-p)}$. Thus, we obtain:

\begin{align}
\rho &= \frac{\text{Cov}(V_i^{\omega_j},V_k^{\omega_j})}{\sqrt{\text{Var}(V_i^{\omega_j})\text{Var}(V_k^{\omega_j})}} \nonumber \\
&= \frac{\pi^2 p(1-p)}{\sqrt{(p(1-p))(p(1-p))}} \nonumber \\
&= \frac{\pi^2 p(1-p)}{p(1-p)} = \pi^2. \label{eq:rho_homo}
\end{align}
Thus, in the homogeneous case with $\hat p=p$, we obtain the direct translation $\rho=\pi^2$ and $\pi=\sqrt{\rho}$.

\subsection{Translated Jury Theorem}

Building upon our translations, we now apply these relationships to reformulate the jury theorem's guarantees in terms of our derived correlation measures. This approach yields new bounds on the success probability that are expressed using standard statistical metrics rather than the abstract opinion leader influence parameter.

To derive the bounds on the correlation coefficient ($\rho$) and covariance ($\mathrm{Cov}$) that maintain the asymptotic success of the jury, we begin with the threshold from Theorem \ref{thm:bounds} on the permissible OL influence
($\pi < \frac{\Delta p}{\Delta p + 1}$).

To obtain equivalent bounds for $\rho$ and $\mathrm{Cov}$, we  use the direct translations derived in the previous section, and substitute accordingly for $\pi$:

\begin{center}
\begin{minipage}[t]{0.48\columnwidth}
\textit{Correlation Coefficient:}
\begin{align*}
\sqrt{\rho} &< \frac{\Delta p}{\Delta p + 1} \\
\rho &< \frac{\Delta p^2}{(\Delta p + 1)^2}.
\end{align*}
\end{minipage}
\hfill
\begin{minipage}[t]{0.48\columnwidth}
\textit{Covariance:}
\begin{align*}
\sqrt{\frac{\mathrm{Cov}}{p(1-p)}} &< \frac{\Delta p}{\Delta p + 1} \\
\mathrm{Cov} &< \frac{\Delta p^2 p(1-p)}{(\Delta p + 1)^2}. 
\end{align*}
\end{minipage}
\end{center}

These derived bounds ensure that the asymptotic convergence of the jury's success probability to 1 is preserved under the new metrics.

\begin{corollary}\label{cor:translated_bounds}
Consider the setting of Theorem~\ref{thm:bounds} with homogeneous competence on the true state $p$ for all agents and the OL (i.e., $p_i=\hat p=p$). 
If either
\[
\rho \in \Big[0,\;\frac{\Delta p^2}{(\Delta p + 1)^2}\Big)
\quad\text{or}\quad
\operatorname{Cov} \in \Big[0,\;\frac{\Delta p^2\,p(1-p)}{(\Delta p + 1)^2}\Big),
\]
then the success probability satisfies
\begin{align*}
P_{\min}\;\ge\; 1 - (m-1) \Big( & p \, e^{-\frac{1}{2}n\Delta p^2(1-\sqrt{\rho})^2} \\
&+ (1 - p) \, e^{-\frac{1}{2}n \big(\Delta p (1 {-} \sqrt{\rho}) - \sqrt{\rho}\big)^2} \Big),
\end{align*}
and, equivalently, in covariance form,
\begin{align*}
P_{\min}\;\ge\; 1 - (m-1) \Big(& p \, e^{-\frac{1}{2}n\Delta p^2\!\left(1-\frac{\sqrt{\operatorname{Cov}}}{\sqrt{p(1-p)}}\right)^{\!2}}\\
&+ (1 - p)\, e^{-\frac{1}{2}n \left(\Delta p \!\left(1 {-} \frac{\sqrt{\operatorname{Cov}}}{\sqrt{p(1-p)}}\right) - \frac{\sqrt{\operatorname{Cov}}}{\sqrt{p(1-p)}}\right)^{\!2}} \Big).
\end{align*}
In particular, $P_{\min}\to 1$ as $n\to\infty$ under either condition.
\end{corollary}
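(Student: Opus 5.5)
The plan is to reduce the corollary to Theorem~\ref{thm:bounds} by substituting the translations derived above. Under the standing homogeneous assumptions we have $\rho=\pi^2$ in \eqref{eq:rho_homo} and $\mathrm{Cov}=\pi^2 p(1-p)$ in \eqref{eq:Covariance_homo}. Since $\pi\in[0,1]$, these maps are inverted by $\pi=\sqrt{\rho}$ and, for $p\in(0,1)$, by $\pi=\sqrt{\mathrm{Cov}/(p(1-p))}$.

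The first step is to check that each hypothesis of the corollary puts $\pi$ into the admissible range of Theorem~\ref{thm:bounds}.
\begin{itemize}
\item The map $\pi\mapsto\pi^2$ is strictly increasing on $[0,1]$. Hence $\rho\in\big[0,\Delta p^2/(\Delta p+1)^2\big)$ is equivalent to $\pi=\sqrt{\rho}\in\big[0,\Delta p/(\Delta p+1)\big)$.
\item Dividing by the positive constant $p(1-p)$ turns the covariance hypothesis into the same interval for $\pi$.
\item We should note that the covariance version needs $p\in(0,1)$. If $p\in\{0,1\}$, then $\mathrm{Cov}=0$ and $\pi$ is not recoverable. These degenerate cases are either excluded by $\Delta p$-group reliability (for $p=0$) or handled directly (for $p=1$, $\sqrt{\mathrm{Cov}}/\sqrt{p(1-p)}$ is undefined, so we restrict to $p\in(0,1)$).
\end{itemize}

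The second step is to verify the remaining hypotheses of Theorem~\ref{thm:bounds}. Private agent approval independence is part of the standing assumptions. $\Delta p$-group reliability is assumed via ``the setting of Theorem~\ref{thm:bounds}''. Finally, $\hat p=p\in[0,1]$. Invoking the theorem and replacing $\hat p$ by $p$ and $\pi$ by $\sqrt{\rho}$ (respectively by $\sqrt{\mathrm{Cov}}/\sqrt{p(1-p)}$) gives the two displayed bounds verbatim. The two bounds are equivalent because $\sqrt{\rho}=\sqrt{\mathrm{Cov}}/\sqrt{p(1-p)}$ by \eqref{eq:rho_homo}.

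The last step is the limit $P_{\min}\to1$. The first exponent has coefficient $\Delta p^2(1-\pi)^2>0$, since $\pi<1$. The second exponent has base $\Delta p(1-\pi)-\pi$, which is strictly positive exactly when $\pi<\Delta p/(\Delta p+1)$, so its square is strictly positive. Both exponentials therefore decay to $0$ as $n\to\infty$, with $m$ and $p$ fixed, and the bound tends to $1$.

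The main subtlety is a modelling point rather than a computation. The translations use conditional independence of private signals and the OL given $\omega_*$, and they need the correlation to be taken conditionally on $\omega_*$. We must make sure this matches the setting in which Theorem~\ref{thm:bounds} is stated. We also need the inversion $\pi=\sqrt{\rho}$ to be legitimate, which relies on $\pi\ge0$ and is why the corollary restricts to nonnegative $\rho$ and $\mathrm{Cov}$.
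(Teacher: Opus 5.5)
Your proposal is correct and is essentially the paper's argument: apply Theorem~\ref{thm:bounds} with $\hat p=p$, substitute $\pi=\sqrt{\rho}$ (resp.\ $\pi=\sqrt{\mathrm{Cov}/(p(1-p))}$), note that the threshold $\pi<\Delta p/(\Delta p+1)$ is equivalent to the stated ranges for $\rho$ and $\mathrm{Cov}$, and inherit convergence from the theorem. Your extra care about monotonicity of squaring, positivity of the exponents and degenerate $p$ goes slightly beyond the paper; note only that the restriction $p\in(0,1)$ is needed for the $\rho$-form too, since $\rho$ itself is undefined when $p(1-p)=0$, whereas the covariance hypothesis is simply the empty interval $[0,0)$ there.
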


Convergence holds as we apply Theorem~\ref{thm:bounds} (which was shown to converge in \cite{OL})
with $\hat p=p$ and substitute $\pi=\sqrt{\rho}$ (or $\pi=\sqrt{\sfrac{\operatorname{Cov}}{p(1-p)}}$).
The threshold $\pi<\sfrac{\Delta p}{\Delta p+1}$ is then equivalent to 
$\rho<\sfrac{\Delta p^2}{(\Delta p + 1)^2}$ and 
$\operatorname{Cov}<\sfrac{\Delta p^2\,p(1-p)}{(\Delta p + 1)^2}$.

To conclude this section, we illustrate the translated values for covariance and the correlation coefficient in terms of $\pi$ and a prescribed competency value $p$ in Figure \ref{doubleplot}.

\begin{figure*}[t] 
    \centering 
    \begin{minipage}[b]{0.48\linewidth} 
        \centering
 \begin{tikzpicture}
            \begin{axis}[
                width=\linewidth, 
                height=0.8\linewidth, 
                xlabel={$\pi$},
                ylabel={Translated Correlation Value},
                legend style={at={(0.05,0.95)},anchor=north west},
                grid=major,
                ymin=0,
                xmin=0,
                xmax=1,
                ymax=1,
            ]
            \addplot[
            domain=0:1, 
            samples=50, 
            color=blue, 
            thick 
        ]
        {x^2*0.4*(1-0.4)};
        \addlegendentry{Covariance};

        \addplot[
            domain=0:1, 
            samples=50, 
            color=red, 
            thick 
        ]
        {x^2};
        \addlegendentry{Correlation Coefficient};

            \end{axis}
        \end{tikzpicture}

    \end{minipage}%
    \hfill 
    \begin{minipage}[b]{0.48\linewidth}
        \centering
        \begin{tikzpicture}
            \begin{axis}[
                width=\linewidth, 
                height=0.8\linewidth, 
                xlabel={$\pi$},
                ylabel={$p$},
                zlabel={Cov},
                zmin=0,
                zmax=0.25,
                xmin=0,
                xmax=1,
                ymin=0,
                ymax=1,
                ztick={0.1,0.2},
                view={20}{30},
                grid=major,
                colormap/jet,
            ]
              \addplot3[
                surf,
                domain=0:1, 
                domain y=0:1, 
                samples=50, 
                samples y=50, 
                ]
            { 
               x^2*y*(1-y)};
            \addlegendentry{Covariance};
            \end{axis}
        \end{tikzpicture}

    \end{minipage}
    \caption{Translated correlation measures depending on $\pi$ for fixed $p = 0.4$ (left) and covariance depending on $\pi$ and $p$ (right).} 
\label{doubleplot}
\end{figure*}

\subsection{Extension to the Heterogeneous Case}

We now extend the analysis to allow for heterogeneous competence levels. Generally, there are two steps of heterogeneity that we can consider. First, we consider the case where the agents remain homogeneous, but the OL's competency differs. Second, we discuss the case where we have full heterogeneity across agents and the OL. 

\paragraph{Homogeneous agents with $\hat p\neq p$}
If agents are homogeneous ($p_i\equiv p$) but the OL competence may differ ($\hat p\neq p$), then
$$
\mathbb{E}({V_i}) \;=\; \mu \;=\; \pi \hat p + (1-\pi)p,
\qquad
\Cov{V_i}{V_k} \;=\; \pi^2 \hat p(1-\hat p).
$$
From this, we obtain
$$
\rho \;=\; \frac{\pi^2 \hat p(1-\hat p)}{\mu(1-\mu)}.
$$
These identities follow by conditioning on the OL’s choice $X_o$ exactly as in the homogeneous case:
the expectation is a convex combination of $\hat p$ and $p$, the covariance reduces to $\pi^2\mathrm{Var}(X_o)$,
and the correlation coefficient is obtained by normalizing with the common variance $\mu(1-\mu)$. 
Thus $\rho=\pi^2$ holds if $\hat p=p$. The translated bounds remain valid upon replacing
$p(1-p)$ with $\hat p(1-\hat p)$ and $p$ in the exponents’ weights by $\hat p$. The full derivation can be found in the appendix.

\paragraph{Heterogeneous agents and OL} Let $p_i$ denote the private competence level of agent $i$, meaning $\mathbb{P}(X^{\omega_j}_{i}=1) = p_i$. The opinion leader's competence remains $\hat{p}$, and the influence parameter is $\pi$.

The probability that agent $i$'s final vote is 1 is:
\begin{align*}
\mathbb{P}(V_i^{\omega_j} = 1) &= (p_i + \pi (1-p_i))\hat{p} + (p_i - \pi p_i)(1-\hat{p}) \\
&= p_i + \pi(\hat{p} - p_i)
\end{align*}
Thus, $\mathbb{E}(V_i^{\omega_j}) = p_i + \pi(\hat{p} - p_i)$. The variance of agent $i$'s vote is derived from this expectation:
$$ \text{Var}(V_i^{\omega_j}) = (p_i + \pi(\hat{p} - p_i)) (1 - (p_i + \pi(\hat{p} - p_i))) $$
Note that each agent now has a unique variance for their vote, depending on their $p_i$.

\paragraph{Covariance.}
For any two distinct agents, $i$ and $k$ ($i \neq k$), their votes are conditionally independent given the OL's choice. We can compute the covariance using the law of total expectation and the property of conditional covariance, which states with respect to $X_o^{\omega}$,
$
\operatorname{Cov}\!\left(V_i^{\omega},V_k^{\omega}\right)
= \mathbb{E}\!\left[\operatorname{Cov}\!\left(V_i^{\omega},V_k^{\omega}\mid X_o^{\omega}\right)\right]
\;+\; \operatorname{Cov}\!\left(\mathbb{E}\!\left[V_i^{\omega}\mid X_o^{\omega}\right],\,\mathbb{E}\!\left[V_k^{\omega}\mid X_o^{\omega}\right]\right).
$
Since agent votes are conditionally independent given the OL's choice ($X_o^{\omega_j}$), the first term ($\mathbb{E}\!\left[\operatorname{Cov}\!\left(V_i^{\omega},V_k^{\omega}\mid X_o^{\omega}\right)\right]$) is zero as $V_i^{\omega_j}$ and $V_i^{\omega_k}$ have no covariance once the OL's choice is fixed. We therefore only need to compute the second term.

First, we determine the conditional expectations for each agent's vote given the OL's choice, $X_o^{\omega_j}$:
\begin{align*}
\mathbb{E}[V_i^{\omega_j} \mid X_o^{\omega_j}] &= \mathbb{P}(V_i^{\omega_j}=1 \mid X_o^{\omega_j}) \\
&= (p_i + \pi (1-p_i))X_o^{\omega_j} + (p_i - \pi p_i)(1-X_o^{\omega_j}) \\
&= p_i - \pi p_i + X_o^{\omega_j}(p_i + \pi(1-p_i) - (p_i-\pi p_i)) \\
&= p_i - \pi p_i + X_o^{\omega_j}(\pi - \pi p_i + \pi p_i) \\
&= p_i + \pi(X_o^{\omega_j} - p_i)
\end{align*}
Similarly, for agent $k$:
$$
\mathbb{E}[V_k^{\omega_j} \mid X_o^{\omega_j}] = p_k + \pi(X_o^{\omega_j} - p_k)
$$
Now, we can compute the covariance between these two conditional expectations:
\begin{align*}
\text{Cov}(V_i^{\omega_j}, V_k^{\omega_j}) &= \text{Cov}(\mathbb{E}[V_i^{\omega_j} \mid X_o^{\omega_j}], \mathbb{E}[V_k^{\omega_j} \mid X_o^{\omega_j}]) \\
&= \text{Cov}(p_i + \pi(X_o^{\omega_j} - p_i), p_k + \pi(X_o^{\omega_j} - p_k)) \\
&= \text{Cov}(\pi X_o^{\omega_j}, \pi X_o^{\omega_j}) \\
&= \pi^2 \text{Var}(X_o^{\omega_j}) \\
&= \pi^2 \hat{p}(1-\hat{p})
\end{align*}
Remarkably, the covariance between any two agents $i$ and $k$ is solely determined by the opinion leader's competence, $\hat{p}$, and the influence parameter, $\pi$, irrespective of the individual agent competencies, $p_i$ and $p_k$. This result provides a bidirectional translation for the heterogeneous case: $ \pi = \sfrac{\sqrt{Cov}}{\sqrt{\hat{p}-(\hat{p})^2}}$.

Since covariance is uniform across all pairs, the success probability bounds in Corollary \ref{cor:translated_bounds} directly extend to the heterogeneous case when expressed in terms of covariance. Thus, the same corollary as above applies.

\paragraph{Correlation Coefficient.}
The correlation coefficient between agents $i$ and $k$, $\rho_{ik}$, is given by:
$$ \rho_{ik} = \frac{\text{Cov}(V_i^{\omega_j},V_k^{\omega_j})}{\sqrt{\text{Var}(V_i^{\omega_j})\text{Var}(V_k^{\omega_j})}} $$

Substituting the derived covariance and variances, we get:

\begin{align}
\rho_{ik} &= \frac{\pi^2 \hat{p}(1 - \hat{p})}
{\sqrt{(p_i + \pi(\hat{p} - p_i))(1 - (p_i + \pi(\hat{p} - p_i)))}} \nonumber \\
&\quad \times \frac{1}
{\sqrt{(p_k + \pi(\hat{p} - p_k))(1 - (p_k + \pi(\hat{p} - p_k)))}} \label{eq:rho_hetero}
\end{align}

This expression shows that the correlation coefficient now explicitly depends on the individual competencies $p_i$ and $p_k$ of the two agents, meaning that $\rho_{ik}$ will generally be unique for each pair of agents.


That is, unlike the covariance translation in the heterogeneous case, we cannot derive a single, general translation for the correlation coefficient in the heterogeneous case since the correlation coefficient is not uniform across all pairs of agents. Thereby, we can moreover not derive a novel CJT-bound for the heterogeneous correlation coefficient straightforwardly. Intuitively, this reflects the fact that the strength of correlation between agents depends not only on the leader’s influence $\pi$ but also on how well each individual agent performs relative to the leader.

Since there is no single value of $\rho$ that describes the entire group, it is impossible to define a universal threshold for the group correlation that guarantees the success of the jury theorem. Instead we derive a bound akin to Corollary \ref{cor:translated_bounds} where we bound the maximal $\rho$-value. 

To derive this bound, we require that each agent’s vote exhibits genuine randomness. That is, we assume that

$$\sigma_{\min} \le \mathrm{Var}(V_i^{\omega_j}) \le \sigma_{\max}.$$ 

With that, $\sigma_{\min}>0$ excludes the degenerate cases where an agent’s vote is deterministic (zero variance), while $\sigma_{\max}\le 1/4$ reflects the natural upper bound for the variance of a Bernoulli random variable.

From the bounded variance, we can define the maximal $\rho$-value, and bound this in terms of $\pi$. That is, for each pair of distinct agents $(i,k)$, let $\rho_{ik}$ denote their correlation coefficient, and define the worst-case pairwise correlation
$$
\rho_{\max} \ :=\ \max_{i\neq k}\ \rho_{ik}.
$$

\paragraph{Bounding OL influence in terms of $\pi$.} 


We derive the bound on $\pi$ as follows. Recall from our earlier derivation, the covariance is
$$
\mathrm{Cov}(V_i^{\omega_j},V_k^{\omega_j}) = \pi^2\hat p(1-\hat p).
$$
 Using the variance bounds and the definition of $\rho_{ik}$, we obtain
$$
\frac{\pi^2 \hat p(1-\hat p)}{\sigma_{\max}} \ \le\ \rho_{ik} \ \le\ \frac{\pi^2 \hat p(1-\hat p)}{\sigma_{\min}}.
$$
Taking maxima over all pairs $(i,k)$ yields
$$
\frac{\pi^2 \hat p(1-\hat p)}{\sigma_{\max}} \ \le\ \rho_{\max} \ \le\ \frac{\pi^2 \hat p(1-\hat p)}{\sigma_{\min}}.
$$

Rearranging gives the two-sided inequality
$$
\sqrt{\frac{\rho_{\max}\,\sigma_{\min}}{\hat p(1-\hat p)}} \ \le\ \pi \ \le\ \sqrt{\frac{\rho_{\max}\,\sigma_{\max}}{\hat p(1-\hat p)}}.
$$
Thus $\pi$ is bounded above by $\vartheta := \sqrt{\frac{\rho_{\max}\sigma_{\max}}{\hat p(1-\hat p)}}$.

\paragraph{CJT-bound threshold.} From this, we obtain a threshold on the permissible $\rho_{\max}$-value. Since

\[
\pi \le \sqrt{\frac{\rho_{\max}\,\sigma_{\max}}{\hat p(1-\hat p)}}\ \text{ and }\ 
\pi<\frac{\Delta p}{\Delta p+1}
\ \Rightarrow\
\sqrt{\frac{\rho_{\max}\,\sigma_{\max}}{\hat p(1-\hat p)}}<\frac{\Delta p}{\Delta p+1},
\]
we can solve for $\rho_{\max}$, and obtain

$$\rho_{\max}<\frac{\hat p(1-\hat p)}{\sigma_{\max}}\left(\frac{\Delta p}{\Delta p+1}\right)^{2}.$$

With that, we are ready to state a CJT-bound based on $\rho_{\max}$:

\begin{corollary}[CJT-Bound via $\rho_{\max}$]\label{cor:rho-max}
Consider the approval-voting setting of Theorem~\ref{thm:bounds} with heterogeneous private competencies $p_i$, opinion leader competence $\hat p\in[0,1]$, and influence $\pi\in[0,1]$. 
Assume there exist constants $0 < \sigma_{\min} \le \sigma_{\max} \le \tfrac{1}{4}$ such that the variance of each agent's final vote satisfies
$$
\sigma_{\min} \ \le\ \mathrm{Var}(V_i^{\omega_j}) \ \le\ \sigma_{\max}
\qquad \text{for all agents $i$.}
$$

In particular, if
\[
\rho_{\max} \;<\; \frac{\hat p(1-\hat p)}{\sigma_{\max}}\left(\frac{\Delta p}{\Delta p+1}\right)^{\!2},
\]
then $\pi<\frac{\Delta p}{\Delta p+1}$ and
\begin{align*}
P_{\min}\ \ge\ 1 - (m-1) \Big(& \hat{p}\, e^{\!\big(-\tfrac{1}{2} n \Delta p^2 (1-\vartheta)^2\big)}  + (1-\hat{p})\, e^{\!\big(-\tfrac{1}{2} n \big(\Delta p(1-\vartheta)-\vartheta\big)^2\big) }\Big),
\end{align*}
with $\vartheta := \sqrt{\frac{\rho_{\max}\sigma_{\max}}{\hat p(1-\hat p)}}$. Hence $P_{\min}\to 1$ as $n\to\infty$.
\end{corollary}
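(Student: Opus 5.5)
The approach is to reduce Corollary~\ref{cor:rho-max} to Theorem~\ref{thm:bounds}. We first show that the hypothesis on $\rho_{\max}$ forces the OL influence $\pi$ into the admissible range $[0,\sfrac{\Delta p}{\Delta p+1})$. We then observe that the bound of Theorem~\ref{thm:bounds} is monotone in $\pi$, so $\pi$ can be replaced by its upper bound $\vartheta$.

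\textbf{Step 1: bound $\pi$ by $\vartheta$.} Assume $\hat p\in(0,1)$, since otherwise the threshold on $\rho_{\max}$ is $0$ and the hypothesis is vacuous. Use the heterogeneous covariance identity $\Cov{V_i^{\omega_j}}{V_k^{\omega_j}}=\pi^2\hat p(1-\hat p)$ derived above. Together with $\mathrm{Var}(V_i^{\omega_j})\le\sigma_{\max}$ for all $i$, this gives $\rho_{ik}\ge \pi^2\hat p(1-\hat p)/\sigma_{\max}$ for every pair $i\neq k$. Hence $\rho_{\max}\ge \pi^2\hat p(1-\hat p)/\sigma_{\max}$, which rearranges to $\pi\le\vartheta$. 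Here we need $n\ge 2$ so that the maximum is over a nonempty set. The lower variance bound $\sigma_{\min}>0$ only guarantees that every $\rho_{ik}$ is well defined. Squaring the hypothesis on $\rho_{\max}$ and rearranging yields $\vartheta<\sfrac{\Delta p}{\Delta p+1}$, and therefore $\pi<\sfrac{\Delta p}{\Delta p+1}$. Theorem~\ref{thm:bounds} therefore applies, with $\hat p$ as the OL competence.

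\textbf{Step 2: substitute $\vartheta$ for $\pi$ (main obstacle).} Theorem~\ref{thm:bounds} gives a lower bound of the form $1-(m-1)\big(\hat p\,e^{-\frac12 n a(\pi)^2}+(1-\hat p)\,e^{-\frac12 n b(\pi)^2}\big)$, where $a(\pi)=\Delta p(1-\pi)$ and $b(\pi)=\Delta p(1-\pi)-\pi$. Replacing $\pi$ by the larger value $\vartheta$ must not increase this bound, so we need $a$ and $b$ to be positive and nonincreasing on $[0,\vartheta]$. Both are affine in $\pi$ with negative slopes $-\Delta p$ and $-(\Delta p+1)$. Also $a>0$ because $\pi<1$, and $b(\pi)>0$ exactly when $\pi<\sfrac{\Delta p}{\Delta p+1}$, which holds on all of $[0,\vartheta]$ by Step~1. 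So $a(\pi)^2\ge a(\vartheta)^2$ and $b(\pi)^2\ge b(\vartheta)^2$. Each exponential term is therefore at least as large at $\vartheta$ as at $\pi$, the weights $\hat p$ and $1-\hat p$ are nonnegative, and the stated bound follows. The delicate point is positivity: without $b(\vartheta)>0$, squaring would break monotonicity. This is exactly why the threshold on $\rho_{\max}$ is chosen as it is.

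\textbf{Step 3: convergence.} Since $a(\vartheta)>0$ and $b(\vartheta)>0$ are constants independent of $n$, both exponentials tend to $0$ as $n\to\infty$. With $m$ fixed, this gives $P_{\min}\to1$.
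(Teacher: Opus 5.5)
Your proposal is correct and follows the paper's argument. The covariance identity $\pi^2\hat p(1-\hat p)$ together with $\mathrm{Var}(V_i^{\omega_j})\le\sigma_{\max}$ gives $\pi\le\vartheta<\frac{\Delta p}{\Delta p+1}$, and monotonicity of the Theorem~\ref{thm:bounds} bound in $\pi$ lets you replace $\pi$ by $\vartheta$; your affine-positivity argument is just a derivative-free version of Lemma~\ref{mono}. One small caveat, which the paper shares: $\vartheta$ depends on $\rho_{\max}$ and hence on the electorate, so calling it a constant independent of $n$ needs an extra assumption such as $\vartheta$ staying uniformly below the threshold, although convergence of the actual success probability follows anyway from Theorem~\ref{thm:bounds} at the fixed $\pi$.
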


Recall that Theorem~\ref{thm:bounds} guarantees asymptotic success if 
$\pi < \sfrac{\Delta p}{\Delta p+1}$. 
From the variance-based translation we have 
$\pi \le \vartheta := \sqrt{\sfrac{\rho_{\max}\,\sigma_{\max}}{\hat p(1-\hat p)}}$. 
Hence, if $\vartheta < \sfrac{\Delta p}{\Delta p+1}$, then also 
$\pi < \sfrac{\Delta p}{\Delta p+1}$. 
Since the lower bound $P_{\min}(\pi)=1-(m-1)g(\pi)$ is \emph{strictly decreasing} in $\pi$ (Lemma~\ref{mono}), 
replacing $\pi$ by its upper bound yields the conservative inequality 
$P_{\min}(\pi) \ge P_{\min}(\vartheta)$. 
Moreover, because $\vartheta<\sfrac{\Delta p}{\Delta p+1}$, the exponents remain negative and 
$P_{\min}(\vartheta)\to 1$ as $n\to\infty$.

\begin{lemma}[Monotonicity of the bound in $\pi$]\label{mono}
Let
\[
g(\pi)\;:=\;
\hat{p}\, e^{-\frac{1}{2} n \Delta p^2 (1-\pi)^2}
\;+\;
(1-\hat{p})\, e^{-\frac{1}{2} n \big(\Delta p(1-\pi)-\pi\big)^2},
\]
so that $P_{\min}(\pi)=1-(m-1)\,g(\pi)$ in Theorem~\ref{thm:bounds}.
Then $g$ is strictly increasing on $\bigl[0,\sfrac{\Delta p}{\Delta p+1}\bigr)$ and hence
$P_{\min}$ is strictly decreasing on this interval. In particular, for any
$0\le \pi\le \vartheta<\sfrac{\Delta p}{\Delta p+1}$,
\[
P_{\min}(\pi)\;\ge\;P_{\min}(\vartheta).
\]
\end{lemma}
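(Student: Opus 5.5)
The plan is to show directly that each of the two summands of $g$ is a strictly increasing function of $\pi$ on $I:=\bigl[0,\sfrac{\Delta p}{\Delta p+1}\bigr)$. Since the weights $\hat p$ and $1-\hat p$ are nonnegative and sum to $1$, a convex combination of two strictly increasing functions is again strictly increasing. The statement about $P_{\min}$ then follows from the fact that $m>1$, so $-(m-1)<0$ and the map $g\mapsto 1-(m-1)g$ reverses strict order.

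\textbf{Step 1: reduce to the exponents.} Write the two exponents as $-\frac12 n\,a_1(\pi)^2$ and $-\frac12 n\,a_2(\pi)^2$, where
\[
a_1(\pi)=\Delta p(1-\pi), \qquad a_2(\pi)=\Delta p(1-\pi)-\pi=\Delta p-\pi(\Delta p+1).
\]
Both are affine in $\pi$ with strictly negative slopes $-\Delta p$ and $-(\Delta p+1)$, because $\Delta p>0$. On $I$ both are strictly positive. For $a_1$ this holds because $\pi<1$. For $a_2$ it is exactly the defining condition $\pi<\sfrac{\Delta p}{\Delta p+1}$, and this is the only place where the right endpoint of $I$ enters.

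\textbf{Step 2: monotonicity of each term.} On $I$, each $a_i$ is positive and strictly decreasing, so $a_i^2$ is strictly decreasing. Hence $-\frac12 n a_i^2$ is strictly increasing, using $n\ge 1$. Composing with the strictly increasing function $\exp$ makes each summand strictly increasing. As a cross-check, I would record the derivative:
\[
g'(\pi)=\hat p\, n\Delta p^2(1-\pi)\,e^{-\frac12 n a_1^2}+(1-\hat p)\, n(\Delta p+1)\,a_2(\pi)\,e^{-\frac12 n a_2^2}.
\]
Both summands are nonnegative on $I$, and at least one weight is positive, so $g'(\pi)>0$ there.

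\textbf{Step 3: conclusion.} From Step 2, $g$ is strictly increasing on $I$, and therefore $P_{\min}=1-(m-1)g$ is strictly decreasing on $I$. For $0\le\pi\le\vartheta<\sfrac{\Delta p}{\Delta p+1}$, both points lie in $I$, which gives $g(\pi)\le g(\vartheta)$ and hence $P_{\min}(\pi)\ge P_{\min}(\vartheta)$.

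The only delicate point is the sign of $a_2$. Outside $I$, $a_2$ becomes negative, $a_2^2$ starts increasing again, and monotonicity fails. So the argument must explicitly use the restriction of $\pi$ to $I$ at this step. Apart from that, the proof is routine.
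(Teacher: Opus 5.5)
Your proof is correct and follows essentially the same route as the paper: the paper differentiates the two summands separately and obtains exactly the derivative you record as a cross-check, with positivity of $\Delta p(1-\pi)-\pi$ on $\bigl[0,\sfrac{\Delta p}{\Delta p+1}\bigr)$ being the only place the right endpoint matters. Your convex-combination argument (the exponential factors are strictly increasing, and the weights are nonnegative with at least one positive) is slightly cleaner than the paper's, which asserts that each summand's derivative is strictly positive even though this fails when $\hat p\in\{0,1\}$.
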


\begin{proof}
To prove monotonicity, we show that the first derivative of $g(\pi)$ is nonnegative. The full proof can be found in the appendix.
\end{proof}

\section{Summary}

In this paper, we take the first steps toward establishing a unified framework for modeling dependence within the Condorcet Jury Theorem (CJT). To this end, we present translations from the classical opinion leader model of voter dependence in CJT literature to two standard measures of dependence: covariance and correlation coefficients for homogeneous and heterogeneous agents, respectively. Additionally, we demonstrate how to use a previously derived bound on the worst-case probability of a voter identifying the correct alternative in a setting with an opinion leader to derive bounds for the newly translated metrics, preserving the asymptotic guarantees of the original CJT. 

\clearpage
\appendix
\section{Technical Appendix}


\subsection*{Proof for Diverging Opinion Leader Competency.}

\begin{proof}[Homogeneous agents and an OL with $\hat p \neq p$]
Let agents be homogeneous with private competence $p_i \equiv p \in [0,1]$ and let the opinion leader (OL) have competence $\hat p \in [0,1]$. Each agent follows the OL with probability $\pi \in [0,1]$ (independently across agents and independently of all approval signals). For a fixed alternative $\omega$, denote the final binary vote of agent $i$ by $V_i^{\omega}$ and the OL’s approval by $X_o^{\omega} \sim \mathrm{Bernoulli}(\hat p)$. Then, for all distinct $i \neq k$:
\[
\mathbb{E}\!\left[V_i^{\omega}\right] \;=\; \mu \;=\; \pi \hat p + (1-\pi)p, 
\qquad
\operatorname{Cov}\!\left(V_i^{\omega},V_k^{\omega}\right) \;=\; \pi^2 \hat p(1-\hat p),
\]
and
\[
\rho_{ik}^{\omega} \;:=\; 
\frac{\operatorname{Cov}\!\left(V_i^{\omega},V_k^{\omega}\right)}
     {\sqrt{\operatorname{Var}\!\left(V_i^{\omega}\right)\,\operatorname{Var}\!\left(V_k^{\omega}\right)}}
\;=\; \frac{\pi^2 \hat p(1-\hat p)}{\mu(1-\mu)}.
\]
In particular, $\rho_{ik}^{\omega} = \pi^2$ if and only if $\hat p = p$.

\medskip
\noindent\textit{Expectation.}
Let $X_i^{\omega} \sim \mathrm{Bernoulli}(p)$ denote agent $i$’s private approval. By the OL mechanism, $V_i^{\omega}$ equals $X_o^{\omega}$ with probability $\pi$ and equals $X_i^{\omega}$ with probability $1-\pi$. Using independence of $X_o^{\omega}$ and $X_i^{\omega}$,
\[
\mathbb{E}\!\left[V_i^{\omega}\right]
= \pi\,\mathbb{E}\!\left[X_o^{\omega}\right] + (1-\pi)\,\mathbb{E}\!\left[X_i^{\omega}\right]
= \pi \hat p + (1-\pi)p
= \mu.
\]

\medskip
\noindent\textit{Variance.}
Since $V_i^{\omega}\in\{0,1\}$, it is Bernoulli with mean $\mu$, hence
\[
\operatorname{Var}\!\left(V_i^{\omega}\right) = \mu(1-\mu).
\]

\medskip
\noindent\textbf{Covariance.}
Using the law of total covariance with respect to $X_o^{\omega}$,
\[
\operatorname{Cov}\!\left(V_i^{\omega},V_k^{\omega}\right)
= \mathbb{E}\!\left[\operatorname{Cov}\!\left(V_i^{\omega},V_k^{\omega}\mid X_o^{\omega}\right)\right]
\;+\; \operatorname{Cov}\!\left(\mathbb{E}\!\left[V_i^{\omega}\mid X_o^{\omega}\right],\,\mathbb{E}\!\left[V_k^{\omega}\mid X_o^{\omega}\right]\right).
\]
Conditional on $X_o^{\omega}$, agents act independently (private signals are independent and following the OL is independent across agents), so the first term is zero. For the second term,
\[
\mathbb{E}\!\left[V_i^{\omega} \mid X_o^{\omega}\right] \;=\; \pi\,X_o^{\omega} + (1-\pi)\,p,
\]
hence
\[
\operatorname{Cov}\!\left(\mathbb{E}\!\left[V_i^{\omega}\mid X_o^{\omega}\right],\,\mathbb{E}\!\left[V_k^{\omega}\mid X_o^{\omega}\right]\right)
= \operatorname{Var}\!\left(\pi\,X_o^{\omega} + (1-\pi)\,p\right)
= \pi^{2}\operatorname{Var}\!\left(X_o^{\omega}\right)
= \pi^{2}\hat p(1-\hat p).
\]

\medskip
\noindent\textbf{Correlation Coefficient.}
Since agents are homogeneous, $\operatorname{Var}\!\left(V_i^{\omega}\right)=\operatorname{Var}\!\left(V_k^{\omega}\right)=\mu(1-\mu)$. Therefore
\[
\rho_{ik}^{\omega}
= \frac{\operatorname{Cov}\!\left(V_i^{\omega},V_k^{\omega}\right)}{\mu(1-\mu)}
= \frac{\pi^{2}\hat p(1-\hat p)}{\mu(1-\mu)}.
\]
The identity $\rho_{ik}^{\omega}=\pi^{2}$ holds if and only if $\mu(1-\mu)=\hat p(1-\hat p)$; with $\mu=\pi \hat p+(1-\pi)p$, this is equivalent to $\hat p=p$.
\end{proof}

\subsection*{Proof of Monotonicity.}

Recall that we let
\[
g(\pi)\;:=\; 
\hat{p}\, e^{\!\big(-\tfrac{1}{2} n \Delta p^2 (1-\pi)^2\big)}  
\;+\; 
(1-\hat{p})\, e^{\!\big(-\tfrac{1}{2} n \big(\Delta p(1-\pi)-\pi\big)^2\big)},
\]
so that $P_{\min}(\pi)=1-(m-1)\,g(\pi)$ in Theorem 2.

We aim to show that $g$ is strictly increasing on $\bigl[0,\sfrac{\Delta p}{\Delta p+1}\bigr)$ and hence
$P_{\min}$ is strictly decreasing on this interval. In particular, for any
$0\le \pi\le \vartheta<\sfrac{\Delta p}{\Delta p+1}$,
\[
P_{\min}(\pi)\;\ge\;P_{\min}(\vartheta).
\]

\begin{proof}
We compute the derivative $g'(\pi)$ by differentiating each term.

\medskip\noindent\textbf{First exponential term.}
Consider
\[
g_1(\pi)\;:=\;\hat p\, e^{\!\Big(-\tfrac{1}{2}n\,\Delta p^{2}\,(1-\pi)^{2}\Big)}.
\]
Using the chain rule with the inner function $(1-\pi)^{2}$, we obtain
\begin{align*}
\frac{d}{d\pi}\,g_1(\pi)
&= \hat p\, e^{\!\Big(-\tfrac{1}{2}n\,\Delta p^{2}\,(1-\pi)^{2}\Big)}\cdot
\frac{d}{d\pi}\!\Big(-\tfrac{1}{2}n\,\Delta p^{2}\,(1-\pi)^{2}\Big)\\
&= \hat p\, e^{\!\Big(-\tfrac{1}{2}n\,\Delta p^{2}\,(1-\pi)^{2}\Big)}\cdot
\Big(-\tfrac{1}{2}n\,\Delta p^{2}\cdot 2(1-\pi)\cdot(-1)\Big)\\
&= \hat p\,n\,\Delta p^{2}\,(1-\pi)\,
e^{\!\Big(-\tfrac{1}{2}n\,\Delta p^{2}\,(1-\pi)^{2}\Big)}.
\end{align*}
Every factor here is nonnegative for $\pi\in[0,1)$, and in fact strictly positive:
$\hat p\ge 0$, $n>0$, $\Delta p^{2}>0$, $(1-\pi)>0$, and the exponential is $>0$.
Hence
\[
\frac{d}{d\pi}\,g_1(\pi)>0\quad\text{for all }\pi\in[0,1).
\]
In particular, this holds on $\bigl[0,\sfrac{\Delta p}{\Delta p+1}\bigr)$.

\medskip\noindent\textbf{Second exponential term.}
Let
\[
g_2(\pi)\;:=\;(1-\hat p)\,e^{\!\Big(-\tfrac{1}{2}n\,\big(\Delta p(1-\pi)-\pi\big)^{2}\Big)}.
\]
Define
\[
f(\pi)\;:=\;\Delta p(1-\pi)-\pi\;=\;\Delta p-(\Delta p+1)\pi.
\]
By the chain rule (outer function $\exp(\,\cdot\,)$, inner function $-\tfrac{1}{2}n\,f(\pi)^{2}$),
\begin{align*}
\frac{d}{d\pi}\,g_2(\pi)
&=(1-\hat p)\,e^{\!\Big(-\tfrac{1}{2}n\,f(\pi)^{2}\Big)}\cdot
\frac{d}{d\pi}\!\Big(-\tfrac{1}{2}n\,f(\pi)^{2}\Big)\\
&=(1-\hat p)\,e^{\!\Big(-\tfrac{1}{2}n\,f(\pi)^{2}\Big)}\cdot
\Big(-\tfrac{1}{2}n\cdot 2 f(\pi)\cdot f'(\pi)\Big)\\
&=(1-\hat p)\,e^{\!\Big(-\tfrac{1}{2}n\,f(\pi)^{2}\Big)}\cdot\big(-n f(\pi) f'(\pi)\big).
\end{align*}
Compute $f'(\pi)$ explicitly from $f(\pi)=\Delta p-(\Delta p+1)\pi$:
\[
f'(\pi)=-(\Delta p+1).
\]
Therefore,
\begin{align*}
\frac{d}{d\pi}\,g_2(\pi)
&=(1-\hat p)\,e^{\!\Big(-\tfrac{1}{2}n\,f(\pi)^{2}\Big)}\cdot n(\Delta p+1)\,f(\pi)\\
&=(1-\hat p)\,n\,(\Delta p+1)\,\big(\Delta p(1-\pi)-\pi\big)\,
e^{\!\Big(-\tfrac{1}{2}n\,\big(\Delta p(1-\pi)-\pi\big)^{2}\Big)}.
\end{align*}
On $\pi\in\bigl[0,\sfrac{\Delta p}{\Delta p+1}\bigr)$ we have
\[
\Delta p(1-\pi)-\pi=\Delta p-(\Delta p+1)\pi>0,
\]
since $\pi<\sfrac{\Delta p}{\Delta p+1}\iff (\Delta p+1)\pi<\Delta p$.
All other factors are nonnegative with at least one strictly positive factor:
$(1-\hat p)\ge 0$, $n>0$, $(\Delta p+1)>0$, the exponential is $>0$, and
$\Delta p(1-\pi)-\pi>0$ on the open interval.
Hence
\[
\frac{d}{d\pi}\,g_2(\pi)>0\quad\text{for all }\pi\in\Bigl[0,\sfrac{\Delta p}{\Delta p+1}\Bigr).
\]

\medskip\noindent\textbf{Combine derivatives.}
Since $g(\pi)=g_1(\pi)+g_2(\pi)$ and each derivative is nonnegative on
$\bigl[0,\sfrac{\Delta p}{\Delta p+1}\bigr)$ with at least one strictly positive everywhere on this interval, we conclude
\[
g'(\pi)>0\quad\text{for all }\pi\in\Bigl[0,\sfrac{\Delta p}{\Delta p+1}\Bigr).
\]
Therefore $g$ is strictly increasing on $\bigl[0,\sfrac{\Delta p}{\Delta p+1}\bigr)$, and
\[
P_{\min}(\pi)\;=\;1-(m-1)g(\pi)
\]
is strictly decreasing on the same interval. The monotonicity inequality
$P_{\min}(\pi)\ge P_{\min}(\vartheta)$ for $0\le \pi\le \vartheta<\sfrac{\Delta p}{\Delta p+1}$ follows immediately.
\end{proof}

\begin{credits}
\subsubsection{\discintname}
The authors have no competing interests to declare that are
relevant to the content of this article.
\end{credits}

\bibliographystyle{splncs04}
\bibliography{mybibliography}

\end{document}